\documentclass[conference]{IEEEtran}
\IEEEoverridecommandlockouts
\usepackage{cite}
\usepackage{amsmath,amssymb,amsfonts}
\usepackage{algorithmic}
\usepackage{graphicx}
\usepackage{textcomp}
\usepackage{xcolor}
\usepackage{amsthm}
\usepackage{bm}
\usepackage{tikz}
\usetikzlibrary{positioning}
\usepackage{algorithmic}
\usepackage{algorithm}
\usepackage{caption}
\usepackage{graphicx}
\usepackage{float} 
\usepackage{subcaption}
\usepackage[final,hyperfootnotes=false,hidelinks]{hyperref}

\newtheorem{thm}{Theorem}

\newtheorem{prop}[thm]{Proposition}

\begin{document}

\title{Data-Driven Neural Estimation of Indirect Rate-Distortion Function\\
\thanks{This work has been supported by the National Natural Science Foundation of China through Grant 62231022.}
}

\author{\IEEEauthorblockN{Zichao Yu, Qiang Sun, and Wenyi Zhang}
      \IEEEauthorblockA{\textit{Department of Electronic Engineering and Information Science} \\
      \textit{University of Science and Technology of China}, \textit{Hefei, China} \\
    \{zichaoyu, qiangsun\}@mail.ustc.edu.cn, wenyizha@ustc.edu.cn}
}

\maketitle

\begin{abstract}
The rate-distortion function (RDF) has long been an information-theoretic benchmark for data compression. As its natural extension, the indirect rate-distortion function (iRDF) corresponds to the scenario where the encoder can only access an observation correlated with the source, rather than the source itself. Such scenario is also relevant for modern applications like remote sensing and goal-oriented communication. The iRDF can be reduced into a standard RDF with the distortion measure replaced by its conditional expectation conditioned upon the observation. This reduction, however, leads to a non-trivial challenge when one needs to estimate the iRDF given datasets only, because without statistical knowledge of the joint probability distribution between the source and its observation, the conditional expectation cannot be evaluated. To tackle this challenge, starting from the well known fact that conditional expectation is the minimum mean-squared error estimator and exploiting a Markovian relationship, we identify a functional equivalence between the reduced distortion measure in the iRDF and the solution of a quadratic loss minimization problem, which can be efficiently approximated by neural network approach. We proceed to reformulate the iRDF as a variational problem corresponding to the Lagrangian representation of the iRDF curve, and propose a neural network based approximate solution, integrating the aforementioned distortion measure estimator. Asymptotic analysis guarantees consistency of the solution, and numerical experimental results demonstrate the accuracy and effectiveness of the algorithm.
\end{abstract}

\section{Introduction}
A natural extension of the lossy source coding problem, named the \textit{indirect} rate distortion problem, was first introduced by Dobrushin and Tsybakov \cite{1057738}, in which the encoder does not have direct access to the source $S$, but instead only has access to an observation $X$, obtained through a probabilistic transformation $P_{X|S}$ from $S$. The objective is to compress $X$ at the smallest possible rate while ensuring that the expected distortion between the source $S$ and its reproduction $Y$ remains below a specified constraint, for a given distortion measure $d: \mathcal{S} \times \mathcal{Y} \rightarrow [0, +\infty]$.

Historically, such indirect rate distortion problem has been used to model the scenario where certain noise is added to the source before encoding. Recently, it has further found extensive applications across various domains, including goal-oriented communication \cite{liu2021rate}, sub-Nyquist sampling \cite{kipnis2015distortion}, privacy protection \cite{isik2022lossy}, estimation from compressed data \cite{kipnis2021rate}, and so on. It facilitates the investigation on the impact of the probabilistic transformation $P_{X|S}$, which can be rather general, on the optimal representation of task-relevant information and its extraction, e.g., the well-known \textit{information bottleneck} paradigm \cite{tishby2000information,goldfeld2020information}. It has been established through a series of information-theoretic studies (see, e.g., \cite{berger2003rate,1054469,1057738,witsenhausen1980indirect}) that, the indirect rate distortion problem can be reduced into an equivalent standard rate distortion problem with a transformed distortion measure $\Bar{d}: \mathcal{X} \times \mathcal{Y} \rightarrow [0, +\infty]$ given by
\begin{align}\label{eqn:d-bar}
    \Bar{d}(x,y) = E[d(S,y)|X=x].
\end{align}
Intuitively, since $S$ (and thus $d(S, y)$) is not observable, when performing coding, one can only work with some form of estimate of $d(S, y)$, and the conditional expectation $\Bar{d}(x,y)$ turns out to be the choice. Further information-theoretic aspects of the indirect rate distortion problem, such as non-asymptotic and multi-terminal settings, have been studied in \cite{yamamoto1980source,490552,oohama2012distributed,guler2015remote,7464359,eswaran2019remote} and references therein.
The information-theoretic fundamental trade-off of the indirect rate distortion problem is characterized by the \textit{indirect} rate-distortion function (iRDF). As discussed in the preceding paragraph, the iRDF can be obtained as the RDF of an equivalent standard rate distortion problem. Analytical solutions of iRDF have been available for a few specific models only; see, e.g., \cite{kipnis2015indirect,1057738,1054469,liu2021rate,9844779}. Although for general models one can employ standard numerical algorithms such as the Blahut-Arimoto algorithm and its variants \cite{1054855,chen2023constrained,hayashi2023bregman}, in order to cope with modern applications where the statistical knowledge of the source and its observation may not be readily available, it is also of great interest to develop data-driven algorithms.

To date, there has been some work on data-driven estimation of standard RDF \cite{10206867,yang2022towards,lei2022neural}, typically based on neural network approximation of some variational characterization of mutual information. Although the iRDF can be reduced into an equivalent RDF, when considering data-driven estimation, however, a non-trivial challenge arises. This is because evaluating the transformed distortion measure $\Bar{d}$ in \eqref{eqn:d-bar} requires the statistical knowledge of \textit{backward} conditional probability distribution $P_{S|X}$, which is unfortunately unavailable under a data-driven setting. In principle, with sufficient amount of data samples, one may construct an empirical estimate of $P_{S|X}$ and use it to obtain an empirical estimate of $\Bar{d}$. This, however, is usually infeasible in practice, as conditional density estimation has been a notoriously difficult problem for high-dimensional settings. Consider, for example, a typical goal-oriented communication setting where the source $S$ is a label (e.g., a dog or cat), $X$ is an image corresponding to $S$ (e.g., a photo of a dog or cat), and the reproduction $Y$ is a classification result based on a compressed version of $X$. For a given label $S$, a dataset may contain a large number of samples of $X$ so as to provide a reasonable empirical estimate of the \textit{forward} conditional probability distribution $P_{X|S}$, but for a given image $X$ there is no clear way to accurately sampling the \textit{backward} conditional probability distribution $P_{S|X}$. In fact, sampling from the posterior (i.e., backward conditional probability distribution) and estimating it holds significant importance in Bayesian inference. Existing methods, such as Markov Chain Monte Carlo (MCMC) \cite{brooks1998markov}, usually face challenges related to issues such as complexity of integration in high-dimensional spaces and sampling precision.

In this work, we propose a data-driven approach to estimating iRDF. In order to tackle the challenge discussed in the preceding paragraph, we first recognize that the transformed distortion measure $\Bar{d}$, as a conditional expectation, corresponds to the solution of a corresponding minimum mean-squared error (MMSE) estimation problem, i.e., for each $y$, estimating $d(S, y)$ upon observing $X = x$. A naive regression estimate based on quadratic loss minimization, however, turns out to be inefficient and even infeasible, because it requires a separate estimate for each $y \in \mathcal{Y}$. Instead, via identifying an alternative quadratic loss minimization problem by exploiting the Markov chain $S \leftrightarrow X \leftrightarrow Y$, we provide a technique for efficiently sampling $\Bar{d}(X, Y)$.

We subsequently reformulate the iRDF as a variational problem corresponding to the Lagrangian representation of the iRDF curve, and solve it leveraging the mapping approach \cite{rose1994mapping}, which transforms optimization in probability space into an equivalent problem of finding optimal mapping. In implementation, we design an algorithm adopting a dual-layer nested neural network, in which the outer layer searches for the optimal mapping and the inner layer estimates $\Bar{d}$. Both layers are progressively trained throughout epochs. We conduct an asymptotic analysis of the algorithm to guarantee its consistency, and also conduct extensive numerical experiments on synthetic and MNIST datasets to demonstrate the accuracy and effectiveness of the algorithm.
\section{Problem Formulation}\label{sec:problem}
Consider a lossy source coding setup illustrated in Figure \ref{fig:model},
where the source \(S\), observation \(X\), reproduction \(Y\), and distortion measure $d$ have been introduced in the previous section.
According to \cite{1057738,berger2003rate,1054469,witsenhausen1980indirect}, the iRDF is given as follows:
\begin{align}
    R(D) &= \min_{\substack{P_{Y|X}: \\ E[\Bar{d}(X,Y)] \leq D }} I(X;Y), \label{ird}
\end{align}
where the conditional probability distribution $P_{Y|X}$ satisfies the expected distortion constraint with respect to the reduced distortion measure $\Bar{d}(x,y) = E[d(S,y)|X=x]$. In this paper, we focus on estimating $R(D)$, in the context where the statistical knowledge of model, i.e., the joint probability distribution $P_{S, X}$ is unknown, except for a dataset of $n$ independent and identically distributed (i.i.d.) samples drawn from $P_{S, X}$, realized as $(x_i, s_i)_{i=1}^{n}$.

\begin{figure}[ht]
    \centering
    \begin{tikzpicture}[>=latex, transform shape]
        \node (s0) at (-1,0) [circle, draw=black,fill=gray] {Source};
        \node (s1) at (1,0) [rectangle] {$P_{X|S}$};
        \node (s2) at (3,0) [rectangle, draw=black,fill=gray!10] {Encoder};
        \node (s3) at (5,0) [rectangle, draw=black,fill=gray!10] {Decoder};
        \node (s4) at (7,0) {};
        \draw[->] (s0) to  node[above]{$S$} (s1) ;
        \draw[->] (s1) to  node[above]{$X$} (s2) ;
        \draw[->] (s2) to  (s3) ;
        \draw[->] (s3) to node[above]{$Y$} (s4) ;
    \end{tikzpicture}
    \caption{Indirect rate-distortion model}
    \label{fig:model}
\end{figure}
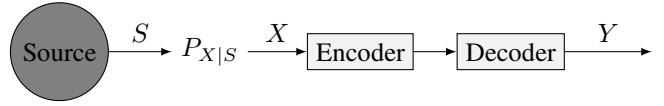
To make the problem tractable, due to the convexity and non-increasing property of $R(D)$ \cite{berger2003rate}, we transform $R(D)$ into an equivalent unconstrained Lagrangian optimization problem \cite{csiszar2011information}, parameterized by the Lagrangian multiplier $\lambda$, as
\begin{align}
   F(\lambda) & = \inf_{P_{Y|X}} I(X;Y) + \lambda E[\Bar{d}(X,Y)]  \label{l}
\end{align}
According to \cite{csiszar2011information}, we have
\begin{align}
    R(D)=\sup_{\lambda \geq 0}[F(\lambda)-\lambda D]. \label{3}
\end{align}
So geometrically $F(\lambda)$ is the $R$-axis intercept of a tangent line to the iRDF $R(D)$ with slope $-\lambda$, and conversely,
$R(D)$ is determined by the convex envelope of all such tangent lines.
The tangent point $(D_{\lambda},R_{\lambda})$ associated with each fixed $\lambda \geq 0$ is achieved by the optimal solution $P_{Y|X}^*$ in \eqref{l}, i.e., $R_{\lambda}=I(X;Y^*)$ and $D_{\lambda}=E[\Bar{d}(X,Y^*)]$, where $Y^*$ denotes the reproduction induced by $P_{Y|X}^*$.
Utilizing the golden formula for relative entropy (see, e.g., \cite[Thm. 4.1]{polyanskiy_wu_2024}), we can rewrite the mutual information $I(X; Y)$ as $\inf_{Q_Y} D(P_{Y|X}\|Q_Y |P_X)$, and $F(\lambda)$ in \eqref{l} becomes
\begin{align}\label{eqn:F_lambda_golden}
    F(\lambda) & = \inf_{Q_{Y},P_{Y|X}} D(P_{Y|X}\| Q_{Y} |P_X) + \lambda E[\Bar{d}(X,Y)].
\end{align}
Under given $Q_Y$, the minimizer $P^*_{Y|X}$ of \eqref{eqn:F_lambda_golden} can be found, by equating the variation of $F(\lambda)$ to zero, as
\begin{align}\label{eqn:P_minimizer}
    P^*_{Y|X}(y|x) = \frac{1}{Z(x)}Q_{Y}(y)\mathrm{e}^{-\lambda \Bar{d}(x,y)},
\end{align}
where $Z(x) = E_{Q_Y} \left[e^{-\lambda \Bar{d}(x, Y)}\right]$ is a normalization parameter. Plugging \eqref{eqn:P_minimizer} into \eqref{eqn:F_lambda_golden}, we can deduce $F(\lambda)$ into
\begin{align}
    F(\lambda) & = \inf_{Q_{Y}} -E_{P_X}\left[\log E_{Q_{{Y}}} \left[\mathrm{e}^{-\lambda \Bar{d}(X,Y)}\right]\right]. \label{F:lambda}
\end{align}

Once obtaining the minimizer $Q^*_Y$ of \eqref{F:lambda}, we get
\begin{align}
    D_{\lambda} = E_{P_X \otimes Q^*_{Y}} \left[ \Bar{d}(X,Y) \frac{\exp\left(-\lambda \Bar{d}(X,Y)\right)}{E_{Y' \sim Q^*_{Y}} \left[\exp\left(-\lambda \Bar{d}(X,Y')\right)\right]} \right]  \label{D_s}
\end{align}
and
\begin{align}
R_{\lambda} = -E_{P_X}\left[\log E_{Q^*_{{Y}}} \left[\mathrm{e}^{-\lambda \Bar{d}(X,Y)}\right]\right] - \lambda D_{\lambda}.
\end{align}
By sweeping over $\lambda \geq 0$, we thus obtain the entire iRDF curve.
\section{Neural Estimator and Consistency Analysis}

In this section, we develop a data-driven neural estimator for iRDF, termed Neural Estimator of Indirect Rate Distortion (NEIRD) algorithm, and provide its consistency analysis.

\subsection{Characterization of Reduced Distortion Measure}

As discussion in the introduction, a key challenge for data-driven estimation of iRDF is the lack of statistical knowledge which hinders the evaluation of the reduced distortion measure $\Bar{d}(x, y)$, as the conditional expectation of $d(S, y)$ upon observing $X = x$. Instead of the highly inefficient naive approach of forming an empirical estimate of $\Bar{d}(x, y)$ for each possible $y$, the following proposition reveals that one can generate $Y$ as an auxiliary random variable, and solve a quadratic loss minimization problem to obtain an estimate of $\Bar{d}(x, y)$ for all possible $(x, y) \in \mathcal{X} \times \mathcal{Y}$ altogether at once.

\begin{prop}
    Considering general random variables $S$, $X$ and $Y$ that form a Markov chain $S \leftrightarrow X \leftrightarrow Y$, we have
    \begin{align}
        \mathop{\arg\min}_{g} E \left[\left(d(S,Y) - g(X,Y)\right)^2 \right] = \Bar{d}, \label{conditional} 
    \end{align}
    where $g$ is among all measurable functions over $\mathcal{X}\times \mathcal{Y}$.
\end{prop}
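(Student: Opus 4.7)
The plan is to reduce the claim to the standard MMSE characterization of conditional expectation and then collapse the resulting conditional expectation using the Markov chain hypothesis. Concretely, view $Z := d(S,Y)$ as the target random variable and $W := (X,Y)$ as the observation. Since the infimum is taken over all measurable functions $g$ on $\mathcal{X}\times\mathcal{Y}$, the classical orthogonal-projection fact in $L^2$ applies (assuming $E[d(S,Y)^2]<\infty$, which I would state as a mild regularity assumption), yielding
\begin{equation*}
    g^{*}(x,y) \;=\; E\bigl[d(S,Y)\,\bigm|\, X=x,\,Y=y\bigr]
\end{equation*}
as the unique (a.s.) minimizer.

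The next step is to show $g^{*}(x,y) = \bar d(x,y)$. First I would substitute the conditioning value $Y=y$ inside the expectation, so that $E[d(S,Y)\mid X=x,Y=y] = E[d(S,y)\mid X=x,Y=y]$; this is the ``taking out what is known'' property applied to the argument of $d$. Then I would invoke the Markov chain $S\leftrightarrow X\leftrightarrow Y$, which gives conditional independence of $S$ and $Y$ given $X$; hence $Y$ drops out of the conditioning and
\begin{equation*}
    E\bigl[d(S,y)\,\bigm|\,X=x,\,Y=y\bigr] \;=\; E\bigl[d(S,y)\,\bigm|\,X=x\bigr] \;=\; \bar d(x,y).
\end{equation*}

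The main obstacle is essentially technical rather than conceptual: making sure the substitution $d(S,Y)\mapsto d(S,y)$ under the conditional expectation is rigorously justified for a general measurable $d$ (e.g.\ via a standard monotone-class / Dynkin $\pi$-$\lambda$ argument starting from indicator distortions of product form and extending to general nonnegative measurable $d$), and handling the edge case where $E[d(S,Y)^2]=\infty$ by a truncation argument or by restating the proposition under an $L^2$-integrability assumption. Uniqueness of the minimizer (up to $P_{X,Y}$-null sets) follows from strict convexity of the quadratic loss, which I would note in passing to emphasize that $\bar d$ is the \emph{unique} best $L^2$-predictor, so any consistent neural regressor trained with squared loss on samples $(x_i,y_i,d(s_i,y_i))$ targets the correct object.
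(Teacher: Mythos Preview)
Your proposal is correct and follows essentially the same approach as the paper: invoke the MMSE characterization to identify the minimizer as $E[d(S,Y)\mid X=x,Y=y]$, then use the Markov chain $S\leftrightarrow X\leftrightarrow Y$ to drop $Y$ from the conditioning and obtain $\bar d(x,y)$. The extra care you take with the substitution $d(S,Y)\mapsto d(S,y)$, the $L^2$-integrability assumption, and the uniqueness remark are reasonable additions but go beyond the level of detail the paper provides.
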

\begin{proof}
    We begin with the well known fact that conditional expectation solves the MMSE estimation problem and note that the left side of \eqref{conditional} seeks for an estimator of $d(S, Y)$ upon observing $(X, Y)$, i.e.,
    \begin{align}
        g(x, y) = E[d(S, y)|x, y].
    \end{align}

    Utilizing the Markov chain $S \leftrightarrow X \leftrightarrow Y$, we get
    \begin{align}
        g(x, y) = E[d(S, y)|x] = \Bar{d}(x, y).
    \end{align}

    Since this holds for every $(x, y) \in \mathcal{X} \times \mathcal{Y}$, we establish \eqref{conditional}.
\end{proof}

Despite of its simplicity, the proposition has a few consequences for developing the NEIRD algorithm. First, \eqref{conditional} indicates that the minimizer of the quadratic loss on its left side is exactly the desired reduced distortion measure, and this holds over the entirety of $\mathcal{X} \times \mathcal{Y}$, not merely any particular $(x, y) \in \mathcal{X} \times \mathcal{Y}$. This implies that we can estimate $\Bar{d}$ as a function, instead of estimating it for each individual $(x, y)$ one by one. Second, there is no restriction on the probability distribution of $Y$, except for the Markov chain $S \leftrightarrow X \leftrightarrow Y$. So we can view $Y$ as an auxiliary random variable and generate it from any probability distribution at convenience. This greatly reduces the cost of generating datasets. Motivated by these, we can introduce a set of neural network parameters $\hat{\Theta}$ to implement $g$ as a neural network denoted by $g_{\hat{\theta}}$, and train $g_{\hat{\theta}}$ using standard gradient descent techniques so as to empirically minimize the quadratic loss in \eqref{conditional}. Once the training is completed, such neural network $g_{\hat{\theta}}$ can be used to generate samples of $\Bar{d}(X, Y)$.

\subsection{NEIRD Algorithm}

For the problem in Section \ref{sec:problem}, instead of directly optimizing the distribution $Q_Y$, we adhere to the idea of optimizing over a family of measurable mappings $T_{\theta}: \mathcal{Z} \to \mathcal{Y}$, which is parameterized by a neural network with parameters $\theta \in \Theta$. For results on the complexity issue of the neural network $T_{\theta}$, see, e.g.,\cite{lu2020universal}. This way, the distribution $Q_Y$ is the image measure induced by $T_{\theta}$, namely, $T_{\theta} \circ P_Z^{-1}$ where $P_Z$ is some basis probability distribution (such as a uniform distribution over a unit cube or a multivariate standard Gaussian distribution) and $\circ$ denotes the push-forward operator. The mapping approach has been applied to the evaluation of standard RDF in \cite{rose1994mapping}, and the existence of such measurable mapping is guaranteed by the Borel isomorphism theorem (see, e.g., \cite{royden2010real}), from which we have that the measurable spaces $(\mathcal{Z}, P_Z)$ and $(\mathcal{Y}, Q_{Y})$ are isomorphic to each other because they are both isomorphic to the $[0,1]$ interval with Lebesgue measure.

Denoting the objective in $F(\lambda)$ \eqref{F:lambda} by $\mathcal{L}^{\lambda}(Q_Y)$, it can be further rewritten via the mapping approach as
\begin{align}
    \mathcal{L}^{\lambda}(Q_Y)= -E_{P_X}\left[\log E_{Z}\left[\mathrm{e}^{-\lambda \Bar{d}(X,T_{\theta}(Z))}\right]\right].\label{14}
\end{align}

Evaluating these expectations via high-dimensional integration is infeasible without exact statistical knowledge, so it is foreseeable that empirical expectation will be employed as the approximate solution. Given i.i.d. datasets $(x_i, s_i)_{i = 1}^n$ and $(z_j)_{j = 1}^m$ drawn from $P_{S, X}$ and $P_Z$, respectively, the expectations in \eqref{14} and \eqref{D_s} can be approximated as
\begin{align}
  \widehat{\mathcal{L}}^{\lambda}(\theta, \hat{\theta}) = -\frac{1}{n} \sum_{i=1}^{n}\log \left(\frac{1}{m}\sum_{j=1}^m \mathrm{e}^{-\lambda g_{\hat{\theta}}(x_i,T_{\theta}(z_j)) }\right)
\end{align}
and
\begin{align}
    \widehat{D}({\theta}, \hat{\theta}) = \frac{1}{mn}\sum_{i=1}^m\sum_{j=1}^{n}g_{\hat{\theta}}(x_i,T_{\theta}(z_j))\frac{\mathrm{e}^{-\lambda g_{\hat{\theta}}(x,T_{\theta}(z_j))}}{\sum_{j=1}^n\mathrm{e}^{-\lambda g_{\hat{\theta}}(x,T_{\theta}(z_j))}}.
\end{align}
These lead to our empirical estimate of $R(D)$ as
\begin{align}
    \widehat{R}(D)=\widehat{\mathcal{L}}^{\lambda}(\theta, \hat{\theta}) - \lambda \widehat{D}({\theta}, \hat{\theta}).
\end{align}

Detailed steps of the NEIRD algorithm are summarized in Algorithm \ref{alg}.
\subsection{Consistency Analysis}

Next, we analyze the consistency of the NEIRD algorithm. In a nutshell, NEIRD relies on two neural networks $g_{\hat{\theta}}$ and $T_\theta$, as well as datasets $(x_i, s_i)_{i = 1}^n$ and $(z_j)_{j = 1}^m$. Therefore, the consistency analysis is essentially divided into two issues: one is the \textit{approximation problem} related to the neural network parameter spaces $\widehat{\Theta}$ and $\Theta$, and the other is the \textit{estimation problem} related to the use of empirical measures.

\begin{thm}
Under each fixed $\lambda \geq 0$ and any given $\epsilon > 0$, there exist $\theta \in \Theta$ and $\hat{\theta} \in \hat{\Theta}$ (where $\Theta \in \mathbb{R}^k$ and $\hat{\Theta} \in \mathbb{R}^k$, for some sufficiently large $k$, are compact sets), and $N$ and $M$ (possibly depending upon $\theta$ and $\widehat{\theta}$), such that
\begin{align}
    \forall n\geq N, m\geq M ,  |\widehat{R}(D) - R(D)|\leq \epsilon, a.e.; \label{thm3}
\end{align}
that is, NEIRD is strongly consistent.
\end{thm}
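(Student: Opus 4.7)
The plan is to decompose $|\widehat{R}(D) - R(D)|$ via the triangle inequality into (i) an \emph{approximation error} between the best representable neural-network parameters and the true population optima of \eqref{F:lambda}--\eqref{D_s}, and (ii) an \emph{estimation error} between $\widehat{\mathcal{L}}^\lambda, \widehat{D}$ and their population counterparts at a fixed parameter choice $(\theta,\hat{\theta})$. Each piece is then bounded by $\epsilon/2$.

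For the approximation piece, I would proceed in two sub-steps. First, since by Proposition~1 the reduced distortion $\bar{d}$ is the minimizer of the quadratic loss $E[(d(S,Y)-g(X,Y))^2]$ over measurable $g$, the universal approximation theorem for neural networks guarantees that for sufficiently large parameter dimension $k$ and a compact $\hat{\Theta}\subset\mathbb{R}^k$, there exists $\hat{\theta}$ with $g_{\hat{\theta}}$ arbitrarily close to $\bar{d}$ in $L^2$ over the joint support of $(X,Y)$. Second, by the Borel isomorphism theorem invoked in the paper, the optimal $Q_Y^*$ solving $F(\lambda)$ is the pushforward $T^*\circ P_Z^{-1}$ for some measurable $T^*$, and universal approximation again yields $\theta\in\Theta$ with $T_\theta\circ P_Z^{-1}$ arbitrarily close to $Q_Y^*$ weakly. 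Continuity of the map $(g,T)\mapsto -E_{P_X}[\log E_Z[\exp(-\lambda g(X,T(Z)))]]$ in its arguments (which holds provided $g$ is bounded) then converts these two approximation gaps into a bound on $|\mathcal{L}^\lambda(\theta,\hat{\theta})-F(\lambda)|$, and an analogous estimate controls the distortion term, jointly bounding the approximation contribution to $|\widehat{R}(D)-R(D)|$.

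For the estimation piece, I fix $(\theta,\hat{\theta})$ and apply the strong law of large numbers twice. As $m\to\infty$, the inner averages $\frac{1}{m}\sum_j e^{-\lambda g_{\hat{\theta}}(x_i,T_\theta(z_j))}$ converge almost surely to $E_Z[e^{-\lambda g_{\hat{\theta}}(x_i,T_\theta(Z))}]$, and boundedness of the integrand (inherited from compactness of $\hat{\Theta}$) lets one take the outer $\log$ and then the outer expectation safely. The outer SLLN as $n\to\infty$ gives $\widehat{\mathcal{L}}^\lambda\to\mathcal{L}^\lambda$ a.s.; the same argument yields $\widehat{D}\to D_\lambda$ a.s., hence $\widehat{R}(D)\to R_\lambda$ a.s. Standard arguments then produce threshold sample sizes $N,M$ (depending on the now-fixed $(\theta,\hat{\theta})$ and the realized dataset) such that the estimation gap is at most $\epsilon/2$ on a probability-one event.

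The main obstacle is propagating the $L^2$-approximation error of $g_{\hat{\theta}}\approx\bar{d}$ through the nonlinear $\exp$ and $\log$ operations in $\mathcal{L}^\lambda$. This requires either that $d$ is bounded (so $g_{\hat{\theta}}$ may be taken bounded, giving uniform Lipschitz constants for $\exp(-\lambda\cdot)$ and a strictly positive lower bound inside $\log$), or a more delicate truncation and tail-control argument. A secondary subtlety is selecting a single $k$ large enough that both the $T_\theta$ and $g_{\hat{\theta}}$ approximations hold simultaneously on their respective compact parameter sets, after which the SLLN thresholds $(N,M)$ depend only on the fixed $(\theta,\hat{\theta})$ and $\epsilon$, closing the chain.
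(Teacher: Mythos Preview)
Your decomposition into approximation and estimation pieces matches the paper's proof exactly, and the approximation half (universal approximation for both $g_{\hat\theta}\approx\bar d$ and $T_\theta$ inducing $Q_\theta\approx Q_Y^*$, then continuity of the functional) is essentially what the paper does. Two points of comparison are worth noting.

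First, for the approximation of $\bar d$ the paper invokes universal approximation in the \emph{uniform} (sup-norm) sense, obtaining $|g_{\hat\theta}(x,y)-\bar d(x,y)|\le \delta$ for all $(x,y)$, and then pushes this through using that $e^{x}$ is $1$-Lipschitz on $(-\infty,0]$ and that $xe^{x}$ is uniformly continuous there. This sidesteps precisely the obstacle you flag about propagating an $L^2$ bound through $\exp$ and $\log$; your proposed fix (assume $d$ bounded, or truncate) would work, but the paper's route is cleaner and avoids the extra case analysis.

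Second, and more substantively, your estimation step fixes a single $(\theta,\hat\theta)$ chosen in the approximation step and applies the ordinary SLLN pointwise. The paper instead treats $\widehat R(D)$ as $\inf_{\theta,\hat\theta}\{\widehat{\mathcal L}^\lambda-\lambda\widehat D\}$, bounds $|\widehat R(D)-\inf_{\theta,\hat\theta} R(Q_\theta,g_{\hat\theta})|$ by $\sup_{\theta,\hat\theta}|\widehat R(Q_\theta,g_{\hat\theta})-R(Q_\theta,g_{\hat\theta})|$, and controls the supremum via a \emph{uniform} law of large numbers over the compact sets $\Theta,\hat\Theta$ (continuity of feed-forward networks in their parameters supplies the needed Glivenko--Cantelli property). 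Your version is lighter machinery and does match the literal ``there exist $\theta,\hat\theta$'' phrasing of the theorem, but it only certifies an oracle parameter choice; the paper's uniform-LLN argument is what actually covers the algorithmically relevant case where $(\theta,\hat\theta)$ are obtained by minimizing the empirical objective and are therefore data-dependent.
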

\begin{proof}
Denote the minimizers of $\eqref{conditional}$ and \eqref{F:lambda} by $g^*(x,y)$ and $Q_Y^*$, respectively.
First, we prove that there exist $Q_{\theta} \sim T_{\theta}(Z)$ where $\theta \in \Theta$ and $g_{\widehat{\theta}}$ where $\hat{\theta} \in \Hat{\Theta}$ such that
\begin{align}
    |R(Q_Y^*, g^*)- R(Q_{\theta}, g_{\hat{\theta}})|\leq\frac{\epsilon}{2}. 
\end{align}
Through the triangle inequality, we have
\begin{align}
    &|R(Q_Y^*,g^*)-R(Q_{\theta}, g_{\hat{\theta}})| \nonumber\\
    \leq &{|R(Q_Y^*,g^*)-R(Q_{\theta}, g^*)|} + {|R(Q_{\theta},g^*)-R(Q_{\theta}, g_{\hat{\theta}})|}, \label{1}
\end{align}
and it is sufficient to show that both terms on the right side of \eqref{1} are less than $\frac{\epsilon}{4}$. According to the universal approximation theorem \cite{hornik1989multilayer}, there exists some $\hat{\theta} \in \hat{\Theta}$ such that $\forall (x, y) \in \mathcal{X} \times \mathcal{Y}$,
\begin{align}
    |g_{\hat{\theta}}(x,y) - g^*(x,y)|\leq \frac{\mathrm{e}^{\frac{\epsilon}{4}}}{\lambda}. \label{2}
\end{align}
Since $\mathrm{e}^x$ is 1-Lipschitz continuous on the interval $(-\infty, 0]$, and $\log x$ is continuous, it holds that
\begin{align}
     \log \left|E_{Q_{\theta}}\left[{e}^{-\lambda \Bar{d}(x,y)} - \mathrm{e}^{-\lambda g_{\hat{\theta}}(x,y)}\right]\right|
         \leq \log  \mathrm{e}^{\frac{\epsilon}{4}}=\frac{\epsilon}{4},
\end{align}
and this ensures that $|\mathcal{L}^{\lambda}(Q_{\theta},g^*) - \mathcal{L}^{\lambda}(Q_{\theta},g_{\hat{\theta}})|\leq \frac{\epsilon}{4}$.

Regarding \eqref{D_s}, we rewrite it as follows:
\begin{align}
D_{\lambda}(g^*)= E_{P_X} \left[ \frac{ E_{Q_{\theta}}[g^*(X,Y) \mathrm{e}^{-\lambda g^*(X,Y)}] }{E_{Q_{\theta}} \left[\mathrm{e}^{-\lambda g^*(X,Y)}\right]} \right]. \label{pr1}
\end{align}
Since $xe^x$ is uniformly continuous on the interval $(-\infty,0]$ and $\mathrm{e}^x$ is 1-Lipschitz continuous on the interval $(-\infty, 0]$, there exists $\upsilon>0$, such that 
    \begin{align}
        |D_{\lambda}(g^*)-D_{\lambda}(g_{\hat{\theta}})|< \frac{\epsilon}{4 \lambda},
    \end{align}
if $|g_{\hat{\theta}}(x,y) - g^*(x,y)|\leq \upsilon$.
Thus we can conclude that $|R(Q_{\theta},g^*)-R(Q_{\theta}, g_{\hat{\theta}})|<\frac{\epsilon}{4}$.
Next, by following the same argument as that in \cite{lei2022neural}, we can obtain $|R(Q_Y^*,g^*)-R(Q_{\theta}, g^*)| \leq \frac{\epsilon}{4}$. 
Therefore, we conclude that 
        \begin{align}
            |R(D)-R(Q_{\theta},g_{\hat{\theta}})| \leq \frac{\epsilon}{2}, \label{thm1}
        \end{align}
        
We further proceed as
        \begin{align}
            &\left|\widehat{R}(D)- \inf R(Q_{\theta},g_{\hat{\theta}})\right| \nonumber\\
            &=\left|\inf \{ \widehat{\mathcal{L}}({\theta}, \widehat{\theta})-\lambda \widehat{D}({\theta}, \widehat{\theta}) \} - \inf R(Q_{\theta},g_{\hat{\theta}})\right| \nonumber\\
            &\leq \sup \left|\widehat{R}(Q_{\theta},g_{\hat{\theta}})-R(Q_{\theta},g_{\hat{\theta}})\right|,
         \end{align}
where $\widehat{R}(Q_{\theta},g_{\hat{\theta}})=\widehat{\mathcal{L}}({\theta}, \hat{\theta})-\lambda \widehat{D}({\theta}, \hat{\theta})$.
         Since $\Theta$ and $\hat{\Theta}$ are compact and feed-forward neural network functions are continuous, the family of functions $Q_{\theta}$ and $g_{\hat{\theta}}$ satisfy the uniform law of
large numbers \cite{geer2000empirical}. Thus, $\forall \epsilon>0$, there exist $N$ and $M$ such that $\forall n\geq N, m\geq M $, with probability one, $\left|\widehat{R}(Q_{\theta},g_{\hat{\theta}})-R(Q_{\theta},g_{\hat{\theta}})\right| \leq \frac{\epsilon}{2}$. We hence obtain
\begin{align}
   |\widehat{R}(D)-R(Q_{\theta},g_{\hat{\theta}})|\leq\frac{\epsilon}{2}, a.e. \label{thm2}
\end{align}
Combining \eqref{thm1} and \eqref{thm2}, we arrive at 
\begin{align}
\left|\widehat{R}(D)-R(D)\right| \leq \epsilon, \forall n\geq N, m\geq M .
\end{align}
The proof is thus completed.  
\end{proof}
\begin{algorithm}[ht] \label{alg}
    \caption{Neural Estimator of Indirect Rate Distortion}
    \label{alg}
    \renewcommand{\algorithmicrequire}{\textbf{Input:}}
    \renewcommand{\algorithmicensure}{\textbf{Output:}}
    \begin{algorithmic}[1]
        \REQUIRE Slope $\lambda$, dataset sizes $n$ and $m$, minibatch size $b$, learning rates $\eta$ and $\xi$, number of iteration \textit{epochs} and $t$, initial neural network parameters $\theta$ of $T_{\theta}$ and $\hat{\theta}$ of $g_{\hat{\theta}}$
                \FOR{$\ell = 1$ to $\textit{epochs}$}
                \STATE Sample $\{x_i\}_{i=1}^n$ i.i.d. $\sim P_X$
                \STATE Sample $\{z_j\}_{j=1}^m$ i.i.d. $\sim P_Z$
                \STATE Calculate $\kappa_{ij}(\lambda,\theta) =\mathrm{e}^{-\lambda g_{\hat{\theta}}(x_i,T_{\theta}(z_j))}$
                \STATE $\theta \leftarrow \theta - \eta \nabla_{\theta} (-\frac{1}{n}\sum_{i=1}^{n}\log (\frac{1}{m}\sum_{j=1}^m \kappa_{ij}(\lambda, \theta) )$
                \FOR{$t'= 0$ to $t$}
                \STATE Sample minibatch $\{(x_i,s_i) \}_{i=1}^b$
                \STATE Sample auxiliary minibatch $\{y_i\}_{i = 1}^b$ 
                \STATE $\hat{\theta} \leftarrow \hat{\theta} - \xi \nabla_{\hat{\theta}} (\frac{1}{b}\sum_{i=1}^b\|g_{\hat{\theta}}(x_i,y_i)-d(s_i,y_i) \|^2)$
                \ENDFOR
        \ENDFOR
        \STATE Calculate $\widehat{R}(D)=\widehat{\mathcal{L}}^{\lambda}(\theta, \hat{\theta}) - \lambda \widehat{D}({\theta}, \hat{\theta})$
        \STATE \textbf{Return:} $g_{\hat{\theta}}$, $T_{\theta}$, and $\widehat{R}(D)$
    \end{algorithmic}
\end{algorithm}

\section{Numerical Experiments}
In this section, we validate the effectiveness of the proposed NEIRD algorithm on both synthetic datasets with known theoretical solutions and the MNIST dataset. 
\subsection{Synthetic Datasets}

We consider two cases: jointly Gaussian model from \cite{9844779} and binary classification model from \cite{liu2021rate}. For both synthetic datasets, the networks consists of five fully-connected layers.

\begin{figure*}[h!]
    \centering
    \begin{subfigure}[b]{0.3\textwidth}
        \centering
        \includegraphics[width=\textwidth]{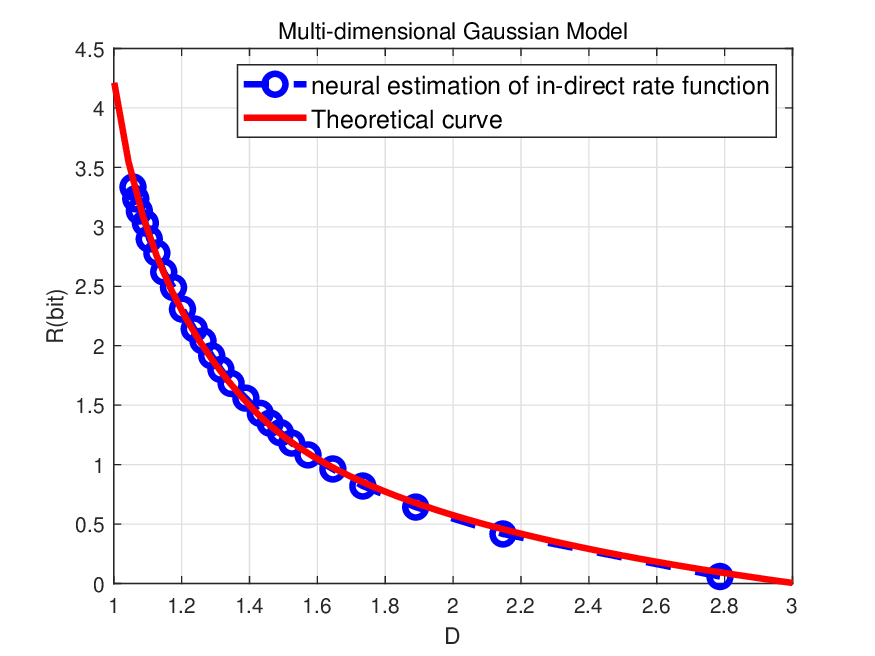}
        \caption{Low dimensional jointly Gaussian}
        \label{fig:fig1}
    \end{subfigure}
    \hfill
    \begin{subfigure}[b]{0.3\textwidth}
        \centering
        \includegraphics[width=\textwidth]{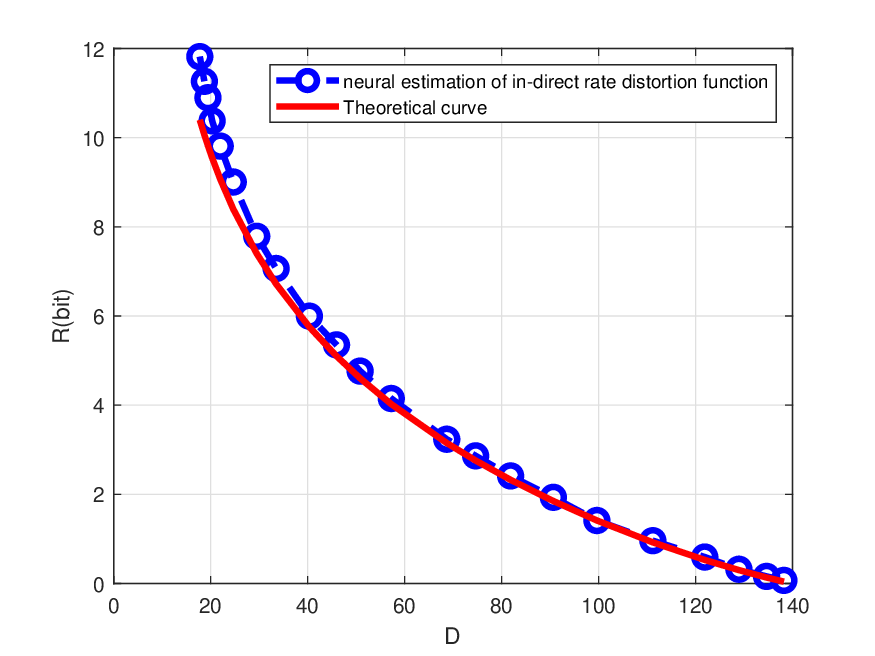}
        \caption{High dimensional jointly Gaussian}
        \label{fig:fig2}
    \end{subfigure}
    \hfill
    \begin{subfigure}[b]{0.3\textwidth}
        \centering
        \includegraphics[width=\textwidth]{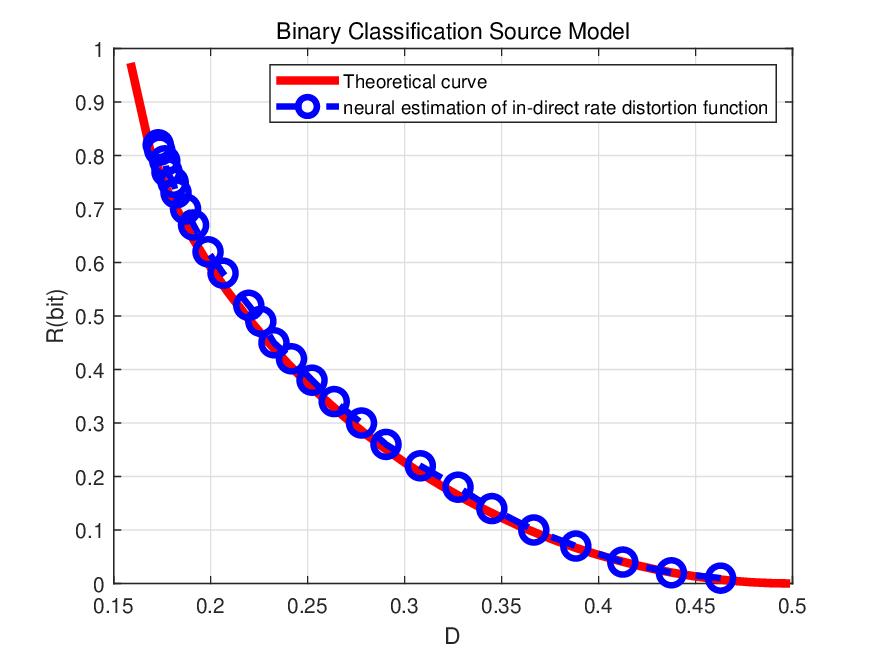}
        \caption{Binary classification}
        \label{fig:fig3}
    \end{subfigure}
    \caption{Performance of NEIRD on synthetic datasets.}
    \label{fig:synthetic source}
\end{figure*}
\begin{figure*}[h!]
    \centering
    \begin{subfigure}[b]{0.3\textwidth}
        \centering
        \includegraphics[width=\textwidth]{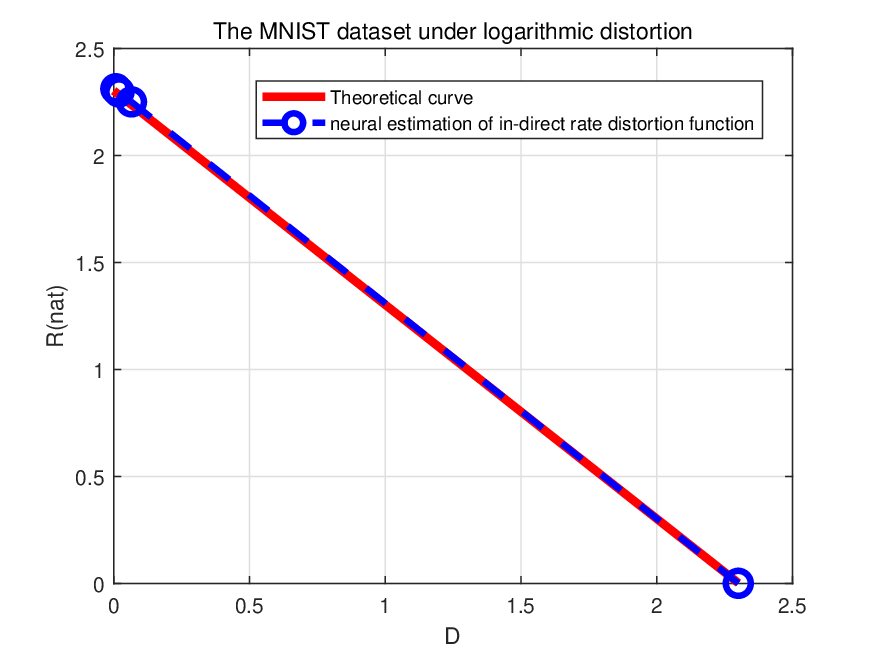}
        \caption{Logarithmic loss distortion}
        \label{fig:fig4}
    \end{subfigure}
    \hfill
    \begin{subfigure}[b]{0.3\textwidth}
        \centering
        \includegraphics[width=\textwidth]{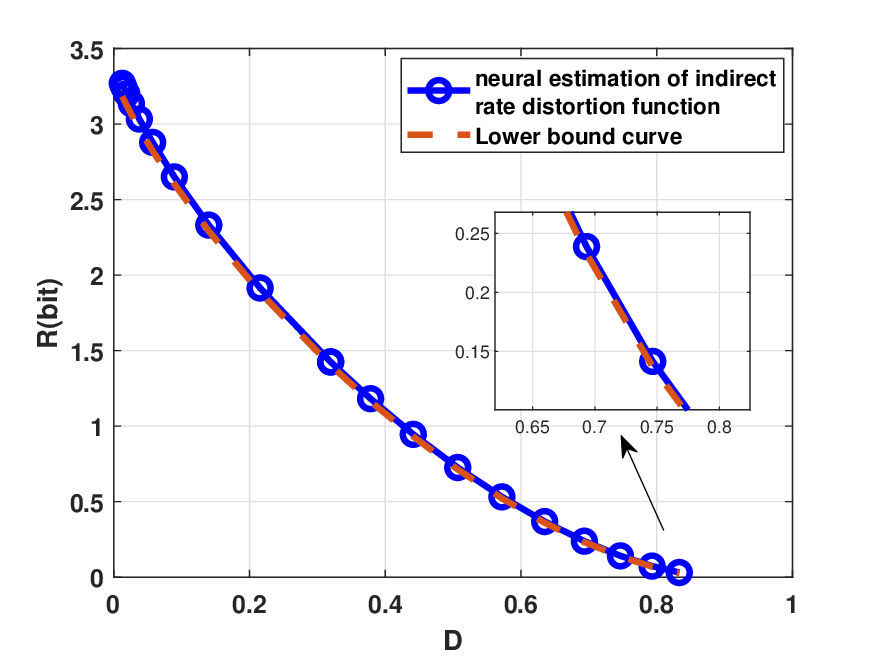}
        \caption{Hamming distortion}
        \label{fig:fig5}
    \end{subfigure}
    \hfill
    \begin{subfigure}[b]{0.3\textwidth}
        \centering
        \includegraphics[width=\textwidth]{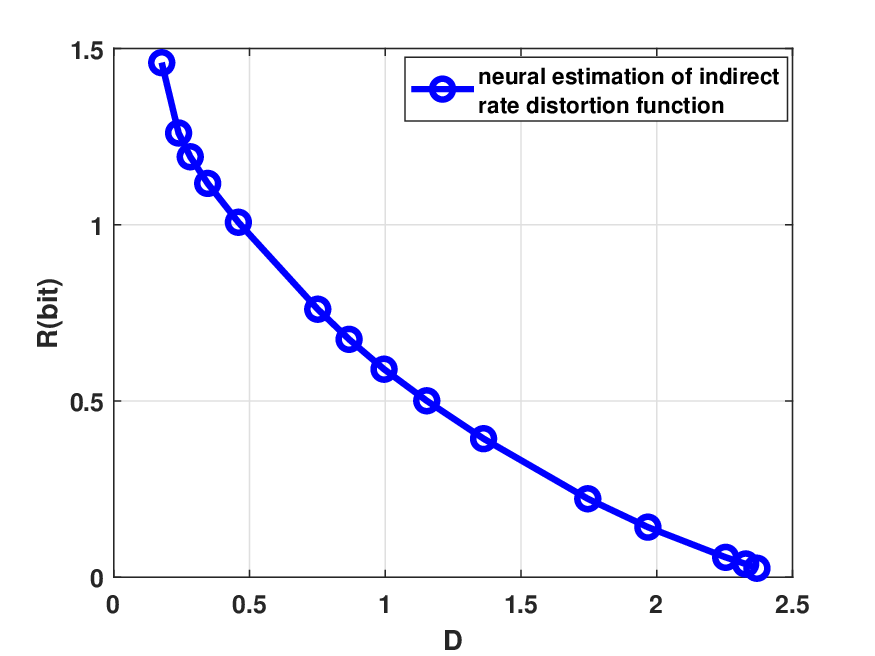}
        \caption{Cross-entropy distortion}
        \label{fig:fig6}
    \end{subfigure}
    \caption{Performance of NEIRD on MNIST.}
    \label{fig:nature source}
\end{figure*}

1) For the joint Gaussian model, the observation $X$ follows a multivariate Gaussian distribution $\mathcal{N}(0, K_X)$, and the source $S$ satisfies $S = HX + W$, where $H$ is a deterministic matrix and $W$ is multivariate Gaussian $\mathcal{N}(0, K_W)$, independent of $X$. We use an example in \cite{9844779} where $K_X$, $H$ and $K_W$ are:
\[
\small
K_X = \begin{bmatrix}
11 & 0 & 0.5 \\
0 & 3 & -2 \\
0.5 & -2 & 2.35
\end{bmatrix}, 
H = \begin{bmatrix}
0.0701 & 0.305 & 0.457 \\
-0.0305 & -0.220 & 0.671
\end{bmatrix},
\]
\[
K_W = \begin{bmatrix}
0.701 & -0.305 \\
-0.305 & 0.220
\end{bmatrix}.
\]
The distortion measure is $d(s, y)=\|s-y  \|_2^2$, and the reduced distortion measure becomes
\begin{align}
    \Bar{d}(x,y)=\mathrm{tr}(K_W)+\|Hx-y \|^2.
\end{align}
The iRDF is given by a semi-definite programming \cite{9844779}:\footnote{In \cite{9844779} an additional distortion constraint on the observation itself is imposed and here we remove it so as to focus on the iRDF only.}
\begin{align} \label{t_g}
    R(D) &= \min_{\Delta \in \mathcal{S}_m} \frac{1}{2} \log \frac{\mathrm{det}(K_X)}{\mathrm{det}(\Delta)}  \\
    \text{s.t.} & \quad \mathbf{0} \prec \Delta \preceq K_X,\nonumber\\
    & \quad \mathrm{tr} (H\Delta H^T) \leq D-\mathrm{tr}(K_W). \nonumber
\end{align}
Figure \ref{fig:fig1} displays the iRDF obtained via the NEIRD algorithm and it can be seen that the curve almost coincides with the theoretical solution \eqref{t_g}.

In addition to this example, we also study a higher dimensional example, where $K_X = 2\mathbf{I}_{120}$, $K_W = \mathbf{I}_6$ and $H$ is a $6\times120$ random matrix whose elements are i.i.d. taking values $\left\{-1, 0, 1 \right\}$ with probability $0.05, 0.90, 0.05$, respectively, to model a sparse correspondence between the source and the observation. Figure \ref{fig:fig2} compares the performance of the NEIRD algorithm and the theoretical solution. Again, highly accurate estimation is achieved.

2) For the binary classification model, $S$ is a binary source drawn from $\{0,1\}$ with equal probability, and the observation $X$ is a conditional Gaussian random variable:
\begin{align*}
    X \sim \mathcal{N}(A, \sigma^2), \text{if $S$ = 0;} \quad \sim \mathcal{N}(-A, \sigma^2), \text{if $S$ = 1}.
\end{align*}
Under the Hamming distortion (i.e., $d(x,y)=1$ if $x \neq y$ and $0$ if $x=y$), the iRDF is given by \cite{liu2021rate}:
\begin{align}
    R(D)=1-\frac{1}{2}\int_{-\infty}^{\infty}[N^+(x)+N^-(x)]h_b(g(x)) \mathrm{d}x, \label{binary}
\end{align}
where $N^+(x)$ and $N^-(x)$ denote the probability density functions of $\mathcal{N}(A, \sigma^2)$ and $\mathcal{N}(-A, \sigma^2)$ respectively, and $h_b(\cdot)$ is the binary entropy function.

Figure \ref{fig:fig3} compares the performance of the NEIRD algorithm and the theoretical solution, and their consistency is clearly evident.

\subsection{MNIST Dataset}\label{SCM}
In this subsection, we apply the NEIRD algorithm to the MNIST, which is a single-channel 28×28 grayscale dataset. First, we consider logarithmic loss as the distortion measure:
\begin{align}
    d(s,y) = \log \frac{1}{P_{Y}(s)}.
\end{align}
The logarithmic loss criterion has been commonly used in information theory and machine learning \cite{courtade2013multiterminal,cesa2006prediction}, and is a type of ``soft" criterion. Under the logarithmic loss, the reduced distortion measure \eqref{eqn:d-bar} becomes $H(S|Y)$, and by making a substitution $\tau = H(S) - D$, \eqref{ird} becomes
\begin{align}
    R(\tau) &= \min_{\substack{P_{Y|X}: \\ I(S;Y) \geq \tau }} I(X;Y),
\end{align}
which is exactly the information bottleneck problem \cite{tishby2000information,goldfeld2020information}.

In this case, $g_{\hat{\theta}}$ consists of three convolution layers, and a generative adversarial network is used to build $T_\theta$. Figure \ref{fig:fig4} displays the iRDF curve obtained via the NEIRD algorithm. Meanwhile, we also plot a theoretical solution claimed in \cite{DBLP:conf/iclr/KolchinskyTK19}, which is obtained under the extra assumption that $P_{S|X}$ is deterministic for the MNIST dataset, as $R(D) = \max\{H(S) - D, 0\}$, a linear segment.\footnote{We observe that the points generated by the NEIRD algorithm are clustered around end-points 
$(H(S), 0)$ and $(0, H(S))$. This is due to the essentially constant-slope behavior of the iRDF for the MNIST, noting that the NEIRD algorithm computes the iRDF for each given slope $\lambda \geq 0$.} In contrast, we emphasize that for our proposed NEIRD algorithm, there is no need to impose any assumption on the joint probability distribution $P_{S, X}$.

We further consider Hamming and cross-entropy distortion measures for $d(s, y)$, with their corresponding iRDF curves displayed in Figures \ref{fig:fig5} and \ref{fig:fig6}, respectively. To provide a benchmark, note that according to the data processing inequality, iRDF has a simple lower bound as follows:
\begin{align}
    R(D) = \min_{\substack{P_{Y|X}: \\ \mathbb{E}[\Bar{d}(X,Y)] \leq D }} I(X;Y) \geq \min_{\substack{P_{Y|S}: \\ \mathbb{E}[{d}(S,Y)] \leq D }} I(S;Y),
    \label{eq:lb}
\end{align}
where the right side is the standard RDF if the source encoder directly compresses $S$. In the case of Hamming distortion, the standard RDF has an explicit closed-form expression (see, e.g., \cite{alajaji2018introduction}) as $H(S) - D\log{(|\mathcal{S}|-1)} - h_2(D)$, if $0 \leq D \leq \frac{|\mathcal{S}|-1 }{|\mathcal{S}|}$ and zero otherwise, which is also plotted in Figure \ref{fig:fig5}, and is seen to be very close to the curve obtained via the NEIRD algorithm. These results provide further evidence of the accuracy and effectiveness of our proposed approach, which is purely data-driven without requiring any statistical knowledge.
\section{Conclusion}
In this work, we propose a data-driven approach to estimating the iRDF, and develop a neural network based algorithm called NEIRD for implementation. By leveraging a functional equivalence between the reduced distortion measure and a quadratic loss minimization problem, the key challenge for data-driven estimation of iRDF is resolved. Furthermore, a mapping approach is utilized to solve the variational problem, integrating together the reduced distortion measure estimator and the reproduction probability distribution, as implemented via a dual-layer nested neural network in the NEIRD algorithm. Numerical experimental results suggest that the proposed approach is a promising solution for various applications including remote sensing and goal-oriented communication.

\bibliographystyle{IEEEtran.bst}
\bibliography{IEEEabrv,references}
\end{document}